\documentclass[10pt, journal]{IEEEtran}
\usepackage{amsmath}
\usepackage{amsthm}
\usepackage{cite}
\usepackage{graphicx}
\usepackage{xspace}

\newtheorem{lemma}{Lemma}
\newtheorem{theorem}{Theorem}

\newcommand{\antilope}{\textsc{antilope}\xspace}
\newcommand{\unit}[1]{\ensuremath{\, \mathrm{#1}}}

\begin{document}

\title{Antilope -- A Lagrangian Relaxation Approach to the \textit{de novo} Peptide Sequencing Problem}
\author{Sandro Andreotti, Gunnar W.\ Klau\IEEEauthorrefmark{1}, Knut Reinert\IEEEauthorrefmark{1}

\IEEEcompsocitemizethanks
{
\IEEEcompsocthanksitem S.\ Andreotti and K.\ Reinert are with the Department of Computer Science, Freie Universit{\"a}t Berlin, Germany and the 
International Max Planck Research School for Computational Biology and Scientific Computing, Berlin, Germany\protect\\
E-mail: andreott@inf.fu-berlin.de
\IEEEcompsocthanksitem G.\ W.\ Klau is with the CWI, Life Sciences Group, Amsterdam, the Netherlands, and the Netherlands Institute for Systems Biology
\IEEEcompsocthanksitem \IEEEauthorrefmark{1}shared last authors}
\thanks{}}




\IEEEcompsoctitleabstractindextext{%
\begin{abstract}
  Peptide sequencing from mass spectrometry data is a key step in
  proteome research. Especially \textit{de novo} sequencing, the
  identification of a peptide from its spectrum alone, is still a
  challenge even for state-of-the-art algorithmic approaches. In this
  paper we present \antilope, a new fast and flexible approach based
  on mathematical programming. It builds on the spectrum graph model
  and works with a variety of scoring schemes. \antilope combines
  Lagrangian relaxation for solving an integer linear programming
  formulation with an adaptation of Yen's $k$ shortest paths
  algorithm. It shows a significant improvement in running time
  compared to mixed integer optimization and performs at the same
  speed like other state-of-the-art tools. We also implemented a
  generic probabilistic scoring scheme that can be trained
  automatically for a dataset of annotated spectra and is independent
  of the mass spectrometer type. Evaluations on benchmark data show
  that \antilope is competitive to the popular state-of-the-art
  programs PepNovo and NovoHMM both in terms of run time and
  accuracy. Furthermore, it offers increased flexibility in the number
  of considered ion types. \antilope will be freely available as part of
  the open source proteomics library OpenMS.
\end{abstract}
}

\maketitle

\IEEEdisplaynotcompsoctitleabstractindextext

\section{Introduction}
Mass spectrometry-based high throughput identification of peptides and
proteins is a key step in most proteomics research experiments. It
requires fast algorithmic solutions with good identification
capabilities.  Depending on
the initial situation of the experiment, two general
strategies exist: database-assisted and \textit{de novo}
identification.  If a database for the studied proteins exists the
first method is usually preferred over \textit{de novo} sequencing.  The
crucial step in database search algorithms like
INSPECT\cite{Tanner_Inspect_2005}, SEQUEST\cite{Sequest_1994}, Mascot\cite{Mascot_99} and OMSSA\cite{OMSSA} is to
filter the database based on different methods.  INSPECT generates
peptide sequence tags (PST) and keeps only those candidate peptides
containing the tag as a subsequence. SEQUEST uses the parent mass as
filter criterion.
After filtering, the query spectrum is scored against the remaining candidates and a ranking of possible identifications is produced.
In addition to the quality of the spectrum, database search methods clearly depend on the correctness and completeness of the database
and hence on the availability of a suitable set of peptides or transcripts for the studied organism. Even if this is the case, factors like alternative splice variants and mutations can lead to missing identifications. 

In such situations \textit{de novo} sequencing algorithms provide an alternative as they infer the sequence from the spectrum itself without any information collected in databases.
In recent years, many algorithms and software packages were published, with the most popular being PEAKS\cite{PEAKS_03}, PepNovo\cite{Pepnovo_05}, NovoHMM\cite{NovoHMM_05}, Lutefisk\cite{Lutefisk_97}, Sherenga\cite{dacvp-denovo:1999}, EigenMS\cite{Bern_EigenMS_2006}, and PILOT\cite{Floudas_DiMaggio_07}.
Most of them use the graph-theoretical approach introduced by Bartels\cite{Bartels:1990} and construct a so-called N-C spectrum graph which is then used to search for the correct sequence. See Fig.~\ref{fig:ncgraph}.

\begin{figure*}[hbt]
\centering
\includegraphics[width=1.0\linewidth]{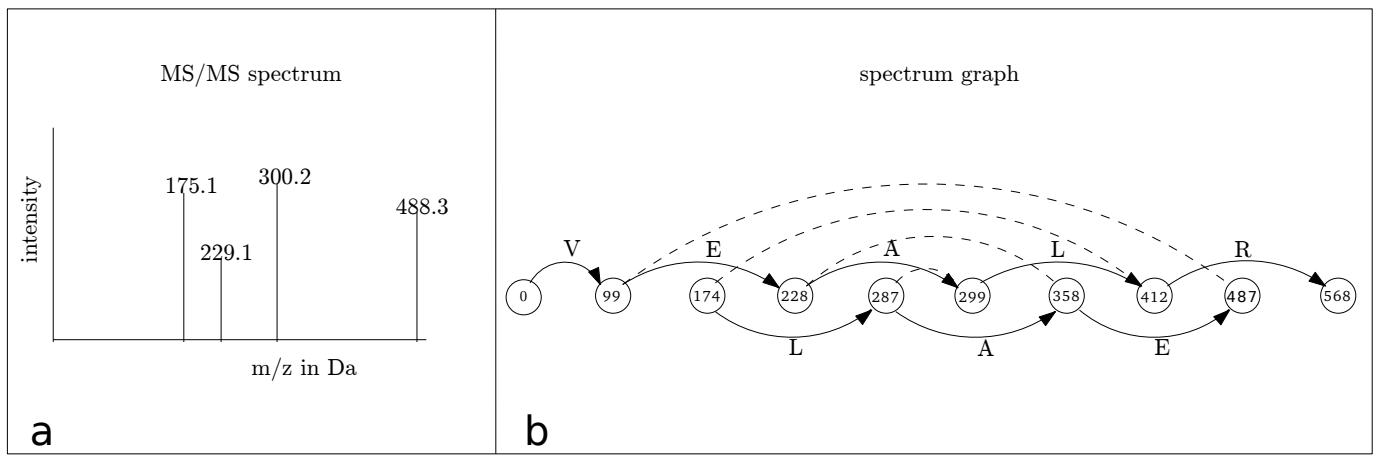}
\caption{Spectrum graph generation.
(a) Simplified tandem mass spectrum of the peptide VEALR\@. Rounded m/z values in Da are presented on top of each peak.
(b) The corresponding spectrum graph with two nodes being generated for each peak.
One under the assumption of being a b-ion, the other under the assumption of being a y-ion.
It it obvious that the path starting at node $s$ with mass 0 and ending at node $t$ with mass 568 encodes the correct peptide sequence.
The undirected edges connecting complementary nodes are drawn as dashed lines.}
\label{fig:ncgraph}
\end{figure*}

Using this formulation, the \textit{de novo} peptide sequencing problem can be formulated as the search for the longest antisymmetric path, an NP-complete problem\cite{Gabow76}. 
PepNovo and Lutefisk solve a special case of this problem by restricting the construction of the spectrum graph, which enables them to apply a dynamic programming algorithm proposed by Chen et al.\cite{Chen_2001, Chen_subopt_03}. The restrictions limit the possible interpretations of each peak to at most one N-terminal (usually b-ion) and one C-terminal (usually y-ion) ion type. Liu and Cai\cite{LiuCaiTreeDecomp} use tree-decomposition to solve the restricted problem. 
Bafna and Edwards\cite{Bafna_Edwards_03}  propose a variant of the dynamic programming approach that also allows for more interpretations leading to a polynomial algorithm of a higher degree.
Their algorithm is still limited to so-called \textit{simple} ion types, excluding doubly and triply charged ions that can also aid the identification process.
PILOT\cite{Floudas_DiMaggio_07} overcomes all these restrictions using an integer linear programming (ILP) formulation for the longest antisymmetric path problem that is flexible and extensible on the cost of efficiency.
This allows for more interpretations of each peak which can lead to improved identification in situations where the prominent b- and y-ions are missing.
Furthermore the ILP formulation can be easily extended in several ways by simply adding or modifying constraints to further restrict or modify the set of possible solutions. The approach also allows for global reasoning such as limiting the number of a certain amino acid type for each prediction.

The main contribution of this work is an improvement of this approach by an extension that retains most of the flexibility and leads to a remarkable improvement in running time.
Instead of focusing on computing one antisymmetric path we propose a novel algorithm to find the $k$ best antisymmetric paths.  We achieve this by applying the Lagrangian relaxation technique to the problem and solving the subproblems with an elegant variant of Yen's $k$ shortest paths algorithm. Lagrangian relaxation was already successfully applied to biological problems such as sequence alignment\cite{DBLP:journals/jco/AlthausC08}, protein\cite{Caprara2004a} and RNA\cite{Bauer2007} structural alignment or protein threading\cite{Threading_Andonov:04}. 

An additional contribution of this paper is a generic probabilistic scoring scheme that can be trained automatically for a dataset of annotated spectra and is independent of the mass spectrometer type. The performance of both, \textit{de novo} and database search approaches, depends on a good scoring function to model prediction quality.
Currently used scoring functions range from rather simple peak intensity-based scoring to statistical models including Bayesian networks.
The latter show a better performance but require re-training for different spectrometer types and thus depend on reliable annotated datasets. 
Our flexible scoring scheme allows for user controlled training on supplied annotated datasets.
The topology of the network can either be defined by the user or, following the approach proposed by Bern\cite{Bern_Spec_Fusion_2008}, learned from the given dataset directly.
We extend this approach by considering ion intensities and cleavage positions similar to the PepNovo scoring in order to account for shifts of the fragmentation patterns between different $m/z$ regions along the spectrum.

Our software \antilope (\textsc{anti}symmetric path search with \textsc{l}agrangian \textsc{o}ptimization for \textsc{pe}ptide identification), an implementation of the improved approach, is freely available as part of upcoming releases of the open source proteomics library OpenMS\cite{OpenMS}.

The structure of the remainder of this paper is as follows. Section~\ref{sec:method} describes our new method. 
In Section~\ref{sec:results} we compare our tool with the state-of-the-art tools PepNovo, NovoHMM, LutefiskXP and PILOT\@. Finally, in Section~\ref{sec:conclusions}, we discuss our results and future work.

\section{Novel De Novo Peptide Sequencing Algorithm}\label{sec:method}

This section describes our new approach to the \textit{de novo} sequencing problem. 
At first we formally introduce the graph-theoretic formulation and the resulting ILP formulation our method \antilope is based on. Then we present our new algorithmic approach to find the $k$ best solutions of the ILP\@. Finally, we explain the scoring model of \antilope.

\subsection{Graph-Theoretical Formulation}
Bartels introduced the
transformation of a tandem mass spectrum into the so-called
\emph{spectrum graph}, a now commonly used data structure in
graph-theoretical approaches to the \textit{de novo} sequencing
problem\cite{dacvp-denovo:1999,Pepnovo_05,LiuCaiTreeDecomp}, see also
Fig.~\ref{fig:ncgraph}.  Using this data structure, the
original problem amounts to finding a longest path with certain properties in this graph.

When a peptide $P$ is fragmented by collision induced dissociation (CID) it usually breaks along the backbone between two neighboring amino acids into a pair of N-terminal (prefix) and C-terminal (suffix) fragments.
We define the residual mass of $P$ as the sum of the monoisotopic masses of all amino acid residues in $P$.
By parent mass $M_P$ we denote the total mass of $P$, which is the residual mass, plus $18\unit{Da}$ for an additional water molecule.
Depending on the exact fragmentation position, different types of fragment ions are produced that have a certain mass offset compared to the prefix residue mass (PRM) or suffix residue mass. 
Besides the types presented in Fig.~\ref{fig:fragmentation}, also neutral loss variants, e.g., loss of water or ammonia, of several ion types are observed frequently as well as multiply charged ions.
The fragmentation process is still not fully understood and which types are generated with which intensity depends on many factors. 
\begin{figure}[b]
\centering
\includegraphics[width=1.0\linewidth]{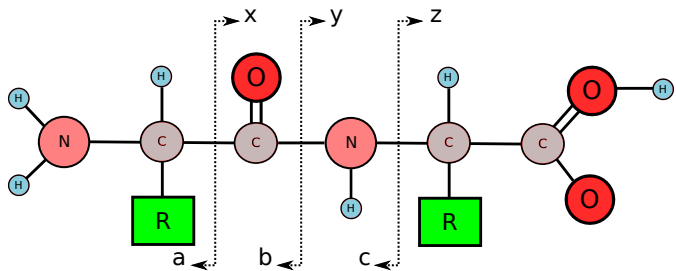}
\caption{Peptide fragmentation along the backbone. 
This figure displays the most prominent fragmentation positions for the generation of pairs of b/y-ion, a/x-ion and c/z-ion in the backbone of a peptide.}
\label{fig:fragmentation}
\end{figure}



The spectrum graph $G$, consists of a set of nodes $V$, a set of directed edges $E_D$ and a set of undirected edges $E_U$.
In the original definition the spectrum graph does not contain the set of undirected edges, which is a slight modification by Liu and Cai\cite{LiuCaiTreeDecomp} who termed this the \emph{extended spectrum graph}.
In the spectrum graph each node corresponds to some possible prefix residue mass of the peptide to be identified.
Directed edges represent amino acids and connect nodes if their mass difference can be explained by some amino acid.
Two nodes that lead to contradicting interpretations of some mass peak are called complementary and are connected by an undirected edge.

Given the tandem mass spectrum of some unknown peptide the construction of the spectrum graph is as follows: Each peak $s$ with mass $m_s$ in the input spectrum generates a set of nodes. If we consider $k$ different N-terminal ion types (e.g., b-ion and a-ion) with mass offsets $\delta_1, \ldots, \delta_k$ ($+1 \unit{Da}$ for b-ions, $-27 \unit{Da}$ for a-ions) from the PRM, then peak $s$ generates $k$ nodes with masses $m_s-\delta_1, \ldots, m_s-\delta_k$.
For C-terminal ion-types with offsets $\delta_1, \ldots, \delta_k$, additional $k$ nodes with masses $M_p-18-(m_s-\delta_1), \ldots, M_p-18-(m_s-\delta_k)$ are generated.
Each of these nodes represents the prefix residue mass under the assumption that $s$ was generated by an ion of a certain type.
Clearly at most one of these nodes can represent the true PRM, therefore they are all contradicting each other and are connected by undirected edges.
Whenever the mass difference of two nodes $v_i$ and $v_k$ equals the mass of some amino acid $\alpha$ ($\pm \epsilon$), we connect $v_i$ and $v_k$ via a directed edge $(v_i,v_k)$ labeled with $\alpha$. Finally we add two so called \emph{goalpost} nodes $s$ and $t$, with masses $0$ and  $P_M-18\unit{Da}$, respectively. 

If the spectrum of some peptide $P$ is complete, i.e., fragment ion peaks are abundant for each possible cleavage site of $P$, then there exists a node for each PRM of $P$.
Therefore the correct sequence of $P$ is obtained by finding the $s$-$t$-path of nodes corresponding to the true prefix sequences of $P$ and by concatenation of the edge labels along this path.
Each node in the spectrum graph has a score that represents the reliability of that node to correspond to a true PRM.

However, simply looking for the longest path in the graph often leads to infeasible solutions, namely if two nodes that were generated by the same peak are included in the path, since in general only one of them corresponds to a true PRM. 
This problem is aggravated when the score of each node is directly related to the intensity of the generating peak.
In such a scenario a high intensity peak generates several high scoring nodes and a longest path search then tends to include a pair of complementary nodes in the longest path leading to a contradicting N- and C-terminal interpretation of the same peak.
Such an infeasible path is called symmetric because the pairs of forbidden pairs of N-terminal and C-terminal nodes form a symmetric structure, which can be seen in Fig.~\ref{fig:ncgraph}.
To solve the \textit{de novo} sequencing problem we hence have to search for \emph{antisymmetric} paths.
These are paths without contradicting nodes. They therefore do not contain pairs of nodes that are connected by an undirected edge. See Fig.~\ref{fig:infeasible_example} for an example.
\begin{figure}[hbt]
\begin{center}
\includegraphics[width=1.0\linewidth]{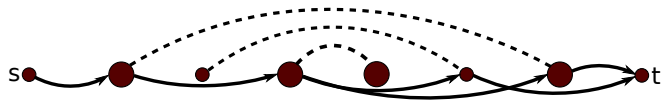}
\caption{Symmetric path example. 
This figure sketches schematically the situation when an infeasible symmetric path would be preferred over a feasible antisymmetric solution.
Assuming that the small nodes have a score of 1 and the bold nodes have a score of 2, the illegal $s$-$t$ path scores higher than the legal one.
Therefore, in this example, a simple longest path search yields infeasible solutions.
}
\label{fig:infeasible_example}
\end{center}
\end{figure}

Most \textit{de novo} sequencing algorithms generate one pair of complementary nodes for each peak assuming it being either a b-ion or y-ion.
These pairs form a nested non-interleaving structure allowing for efficient computation.
But although b- and y-ions are usually the most abundant in CID spectra, there are cases in which both of them are missing and therefore no correct node is generated in this case.
Therefore it is promising to include nodes for other interpretations, especially in the low and high mass range of the spectrum where fragmentation is usually less complete.

While the longest antisymmetric path problem is NP-complete for general directed graphs\cite{Gabow76}, there exist polynomial algorithms for the special case where the non-interleaving property is satisfied.
The polynomial algorithm proposed by Chen\cite{Chen_2001} uses dynamic programming to compute an optimal solution to the longest antisymmetric path with non-interleaving forbidden pairs.
In a second paper Lu and Chen\cite{Chen_subopt_03} extended this approach to compute suboptimal solutions by constructing a so called matrix spectrum graph and applying depth-first search and a backtracking algorithm.
In contrast, the ILP formulation presented in the next section does not depend on such a nested structure and corresponds to the \textit{de novo} sequencing problem for any desired set of ion types.

\subsection{Integer Linear Programming Formulation}\label{sec:complete_ilp}
Our algorithm is based on the following integer linear programming (ILP) formulation\cite{Andreotti2008}, which is very similar to the one Floudas and DiMaggio used for their tool PILOT\cite{Floudas_DiMaggio_07}.
Our formulation models the problem by means of zero-one variables for each edge. We put the score of each node on all its outgoing directed edges. As the graph is acyclic, this is a safe transformation.

\begin{align}
\max\sum_{(v_i,v_k) \in{E_D}} \hspace*{-.5em}c_{i,k} x_{i,k}    \label{form_obj_func}\\
\sum_{(v_s,v_k)\in{E_D}}\hspace*{-.5em}  x_{s,k}&= 1  \label{form_const_src_out}\\
\sum_{(v_k,v_t)\in{E_D}}\hspace*{-.5em} x_{k,t} &= 1  \label{form_const_sink_in}\\
\sum_{(v_i,v_k)\in{E_D}}\hspace*{-.5em} x_{i,k} - \hspace*{-.75em}\sum_{(v_k,v_j)\in{E_D}}\hspace*{-.5em} x_{k,j} &= 0 \qquad \forall k\in{V} \setminus{\{v_s,v_t\}} \label{form_const_flow_cons}\\
\sum_{v_i \in e} \sum_{(v_i,v_k)\in E_D}\hspace*{-.5em} x_{i,k} &\le 1 \qquad \forall e \in{E_U} \label{form_const_antisym}\\
x_{i,k}&\in \{0,1\} \label{form_const_binary}
\end{align}
We introduce a binary variable $x_{i,k}$ for every directed edge $(v_i,v_k)\in E_D$ which has value one if edge $(v_i,v_k)$ is part of the path (active) and zero otherwise (inactive).
The objective function~(\ref{form_obj_func}) maximizes the summed score of all active directed edges.
For the two goalposts $s$ and $t$, the two constraints~(\ref{form_const_src_out}) and~(\ref{form_const_sink_in}) assure that exactly one active edge leaves $s$ and one enters $t$.
Together with the flow conservation constraints~(\ref{form_const_flow_cons}), they establish a correspondence between feasible solutions of the ILP and $s$-$t$ paths in the graph. 
An optimal solution of the ILP consisting of objective function~(\ref{form_obj_func}) and constraints~(\ref{form_const_src_out}), (\ref{form_const_sink_in}) and (\ref{form_const_flow_cons}) corresponds to a longest $s$-$t$ path, still possibly symmetric and therefore infeasible for the \textit{de novo} sequencing problem.
Therefore we add another constraint~(\ref{form_const_antisym}) that makes sure that for each pair of contradicting nodes at most one will be selected. 
The difference of our model to the one proposed by Floudas and DiMaggio is twofold.
First, we do not introduce variables for nodes as they are not required.
This does not change the general structure of the formulation and has no strong effect on the time required for solving.
Second, we do not formulate a constraint that prevents the exact mass of the predicted sequence to deviate from the measured parent mass by more than a certain threshold value (usually $2.5 \unit{Da}$).
We argue that in our algorithm it is more promising to defer this filtering to a later stage of the algorithm.
Since we add edges that correspond to pairs and triples of amino acids which often represent several possible combinations of amino acids, there is no exact mass which could be used in such a constraint.
Therefore we perform the filtering at a later stage when we have created the candidate superset.

\subsection{Applying Lagrangian Relaxation} \label{sec:lagrange} While
linear programming (LP) problems can be solved in polynomial worst
case time, adding integrality constraints makes them generally NP-hard
and the resulting integer linear programs (ILPs) require different
algorithmic solution approaches. One common method is to first solve
the LP relaxation and then investigate the obtained solution. If the
solution is fractional one has to resort to techniques like
branch-and-bound or branch-and-cut using upper and lower
bounds obtained from heuristics and from the relaxed solution. 

We apply a different kind of relaxation method, \emph{Lagrangian relaxation}, which yields in many cases much more efficient algorithms than those based on LP relaxations because it can exploit structural knowledge of the problem.
Lagrangian relaxation is motivated by the experience that many hard integer programming problems correspond to a significantly easier problem that has been complicated by an additional set of constraints.
To obtain the efficiently computable Lagrangian problem, the complicating constraints are removed and replaced by a penalty term in the objective function. The relaxed problem obtained that way is called the Lagrangian problem and can often be solved efficiently.

The Lagrangian relaxation of the \textit{de novo} sequencing ILP~(\ref{form_obj_func})-(\ref{form_const_binary}) is straightforward as it is very obvious that the antisymmetry constraints form the class of \textit{hard} constraints that complicate the computationally \textit{easy} problem of a longest path search in a directed acyclic graph (DAG).
We can solve this relaxed problem by means of a simple standard algorithm, which can be found in reference material~\cite{cormen2001algorithms}.
To make the Lagrangian relaxation more transparent we rewrite the objective function in a way that the edge variables are grouped by the undirected edges incident to their left end:
\[\max \sum_{e \in E_U}{\sum_{\substack{(v_i,v_k) \in{E_D}, \\ v_i \in e}} c_{i,k} x_{i,k}} \enspace. \]

Next we apply Lagrangian relaxation by dropping the antisymmetry constraint~(\ref{form_const_antisym}) and moving it to the objective function to penalize its violation.
This leads to the Lagrangian problem 
\begin{align} \label{form_lagr_prob}
 Z(\lambda) = \max \sum_{e \in E_U}\sum_{\substack{(v_i,v_k)\in{E_D}, \\ v_i \in e}}&\hspace*{-.5em}(c_{i,k} - \lambda_e)x_{i,k} \ + \sum_{e \in E_U} \lambda_e\\
\sum_{(v_s,v_k)\in{E_D}} x_{s,k} &= 1 \nonumber  \\
\sum_{(v_k,v_t)\in{E_D}} x_{k,t} &= 1  \nonumber \\
\sum_{(v_i,v_k)\in{E_D}} x_{i,k} - \hspace*{-.5em}\sum_{(v_k,v_j)\in{E_D}}\hspace*{-.5em} x_{k,j} &= 0 \qquad \forall k\in{V} \setminus{\{v_s,v_t\}} \nonumber\\
x_{i,k}&\in \{0,1\} \nonumber
\end{align}


The vector $\lambda$ holds the Lagrangian multipliers, non-negative real numbers that define the weight of the penalty term.
\begin{lemma}
The Lagrangian problem~(\ref{form_lagr_prob}) can be solved in linear time and space.
\end{lemma}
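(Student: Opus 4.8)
The plan is to observe that once the antisymmetry constraints~(\ref{form_const_antisym}) have been dropped, the only surviving constraints are the source, sink, and flow-conservation equalities~(\ref{form_const_src_out})--(\ref{form_const_flow_cons}) together with integrality~(\ref{form_const_binary}), and that these describe exactly the set of $s$-$t$ paths in $G$. I would make this correspondence precise first: a binary vector $x$ satisfying~(\ref{form_const_src_out})--(\ref{form_const_flow_cons}) is a $0$-$1$ flow of value one from $s$ to $t$, which by flow decomposition splits into $s$-$t$ paths and cycles; since $G$ is acyclic, no cyclic component can occur, so $x$ is the indicator vector of a single $s$-$t$ path, and conversely every $s$-$t$ path yields a feasible $x$. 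This is the only place where structure of $G$ is genuinely used, and it is the main (though mild) obstacle of the proof: acyclicity is precisely what rules out spurious cyclic flow and guarantees the clean path interpretation of the feasible region.

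Next I would handle the objective. For a fixed multiplier vector $\lambda$, the term $\sum_{e \in E_U}\lambda_e$ is a constant independent of $x$; it therefore plays no role in the maximization, merely shifting the optimal value, and can be added in $O(|E_U|)$ time at the very end. The remaining part is linear in $x$ and can be written as $\sum_{(v_i,v_k)\in E_D}\tilde c_{i,k}\,x_{i,k}$ with modified weights $\tilde c_{i,k}=c_{i,k}-\lambda_e$, obtained by subtracting from each directed edge the penalty of the undirected edge $e$ incident to its left endpoint $v_i$. Computing all modified weights is a single linear-time pass over $E_D$, and uses only $O(|E_D|)$ space.

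Combining these two observations, the Lagrangian problem~(\ref{form_lagr_prob}) reduces exactly to finding a longest $s$-$t$ path in the DAG $G$ under the weights $\tilde c$, plus the additive constant. I would then invoke the standard dynamic-programming algorithm for longest paths in a directed acyclic graph~\cite{cormen2001algorithms}: topologically sort $V$ and, processing nodes in that order, relax the outgoing edges of each node while storing at every node its best value and a predecessor pointer for traceback. Topological sorting and the single relaxation sweep each run in $O(|V|+|E_D|)$ time and use $O(|V|)$ additional space. Since the weight computation and the additive constant are also linear, the overall procedure solves~(\ref{form_lagr_prob}) in linear time and space, as claimed.
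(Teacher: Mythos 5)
Your proof is correct and follows essentially the same route as the paper's: subtract the multipliers from the edge weights, run the linear-time DAG longest-path algorithm, and add back the constant $\sum_{e\in E_U}\lambda_e$. The only difference is that you additionally spell out the flow-decomposition argument identifying feasible $0$-$1$ solutions with $s$-$t$ paths, which the paper treats as established earlier and omits from the proof.
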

\begin{IEEEproof}
Solving the Lagrangian problem consist of the following steps:
First we simply subtract from each edge weight $c_{i,k}$ the value $\lambda_e$, for all undirected edges $e$ incident to node $v_i$.
Then we apply the linear time $O(|V|+|E_D|)$ longest path search algorithm for DAGs on the graph with the modified edge weights.
Finally we add the value of $\sum_{e \in E_U} \lambda_e$ to the score obtained from the longest path search algorithm.
Obviously each of the steps requires only linear time and space. 
\end{IEEEproof}
By restricting the Lagrangian multipliers to non-negative values one can easily show that the value of the solution of the Lagrangian problem is an upper bound to the optimal value of the original problem~\cite{Linear_programming:97}.
In order to obtain a tight bound, the strategy is to find the values for the Lagrangian multipliers that minimize $Z(\lambda)$, which means solving the dual problem:
\[Z_D = \min_{\lambda \geq 0} Z(\lambda)\enspace. \] 
We apply the efficient iterative subgradient optimization algorithm, computing sequences of multipliers $\lambda^t$ where $t = 0, 1, 2, \ldots$ denotes the iteration. 
We start with $\lambda^0_e=0$, for all $e \in E_U$ and in each iteration $t$ we compute the subgradients $S^t_e = 1- \sum_{v_i \in e} \sum_{(v_i,v_k)\in E_D} x_{i,k}$, for all $e \in{E_U}$ and update the Lagrangian multipliers according to formula:
\begin{equation}\label{form_lagr_multi}
\lambda^{t+1}_e =  \max \{0, \lambda^t_e -  \theta^t S^t_e\} \enspace.
\end{equation}
One crucial factor with a huge influence on the performance is the step-size $\theta$.
The subgradient method converges to the optimal solution $Z_D$ if the step-size satisfies the following conditions\cite{HeldWolfeCorwder:1974}:
\[ \lim_{k \to \infty} \theta^k = 0 \quad \text{and} \quad \lim_{k \to \infty} \sum_{i=1}^{k}{\theta^i} = \infty \enspace. \]
A formula that is widely used for step-size computation because it shows good performance in practice is given by
\begin{equation}\label{form_lagr_step-size}
\theta^t = \frac{\gamma^t (Z(\lambda^t) - Z^*)} {\sum_{e \in{E_U}}{(S^t_e)^2}} \enspace,
\end{equation}
where $Z^*$ is the value of the best solution to the original problem that was computed yet and $\gamma^t$ defines a decreasing adaption parameter.

\subsection{Suboptimal Solutions}
A straightforward strategy to compute suboptimal solutions, also implemented in PILOT\cite{Floudas_DiMaggio_07}, is to cut off previous solutions by an additional constraint.
A known drawback of this approach is that solving time may increase dramatically after generating a few suboptimal solutions.
We suggest a different strategy and overcome this problem by means of an algorithm which, for a given number $k$, directly computes the $k$ longest paths.
We use the algorithm by Yen\cite{yen_k_shortest} that was originally designed to compute the $k$ shortest paths without cycles on general directed graphs.

Yen's algorithm is a deviation algorithm based on the fact that the $i$-th shortest path $P_i$, will coincide with every shorter path $P_{i-1} \ldots P_1$ up to some node until it deviates.
The farthest node from the source $s$ with this property is called \emph{deviation node} $d(P_i)$.

The strategy to find the $i+1$-st shortest $s$-$t$ path $P_{i+1}$ is, starting at $d(P_i)$, to compute for each node $v^i_j$ of $P_i$ the shortest path to $t$, that deviates from $P_i$ at node $v^{i}_j$.
Therefore a shortest path from $v^{i}_j$ to $t$ is computed which is not allowed to use the edge ($v^{i}_j$, $v^{i}_{j+1}$).
This shortest path from $v^{i}_j$ to $t$ is then concatenated with the prefix $(v^i_1\ldots v^i_{j-1})$ of $P_i$ to obtain the shortest $s$-$t$ path that deviates from $P_i$ at node $v^{i}_j$.
This path is added to a candidate set $X$.
After the shortest deviating paths of $P_i$ have been computed, the shortest path in the candidate set $X$ corresponds to the $i+1$-st shortest $s$-$t$ path $P_{i+1}$ and is removed from $X$.

Yen's algorithm performs an additional trick to guarantee for paths without cycles that we do not discuss here.
For a more detailed description of this algorithm and variants please refer to reference material\cite{yen_k_shortest,Martins2003}.

Our problem differs in a few points from the original problem solved by Yen's algorithm, so it requires a few adaptations.
While Yen's algorithm is designed for general directed graphs that may contain cycles, we are working on a DAG\@.
This simplifies the problem as we do not have to worry about cycles and can simply transform the shortest path problem into a longest path problem. Note that the longest path problem is NP-complete in graphs with cycles. A second difference is that we have the additional condition to find antisymmetric paths.
Therefore every time the shortest path algorithm is called in the Yen's algorithm, we replace this by solving the Lagrangian relaxation formulation for the longest antisymmetric path search.
The following theorem and its proof capture the main algorithmic result of this paper.

\begin{theorem}
The combination of our Lagrangian relaxation-based algorithm for antisymmetric paths and a modification of Yen's algorithm solves the problem of computing the $k$ longest antisymmetric paths in time $O(k l s(|E|+|V|))$, where $l$ is the length of the longest path and $s$ is the total number of subgradient iterations.
\end{theorem}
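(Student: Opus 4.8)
The plan is to bound the running time by peeling the algorithm apart into its three nested layers and charging each layer separately. The outermost layer is Yen's deviation scheme, which produces the paths one at a time: to obtain all $k$ longest antisymmetric paths the main loop executes exactly $k$ times, on each pass extracting the current best element of the candidate set $X$ and spawning the deviating paths of the path just extracted. First I would fix notation for these three layers — the $k$ extractions, the per-path deviation nodes, and the per-subproblem subgradient iterations — so that the final bound emerges as their product.

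For the middle layer I would count the subproblems generated on a single pass. When a path $P_i$ is extracted, Yen's procedure treats each of its nodes in turn as a candidate deviation node and, starting from the recorded deviation node $d(P_i)$, computes for each of them one deviating $s$-$t$ path. Since every $s$-$t$ path in the spectrum graph has length at most $l$ and hence $O(l)$ nodes, at most $O(l)$ subproblems are created per pass, so the total number of subproblems over the whole run is $O(kl)$.

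The innermost layer is the cost of one subproblem. Each deviating path is obtained by solving a longest antisymmetric path problem — from the deviation node to $t$, on the graph with the forbidden edge (and the already-fixed prefix) removed — and we solve this instance with the Lagrangian relaxation of Section~\ref{sec:lagrange}. By the Lemma, a single evaluation of the Lagrangian problem~(\ref{form_lagr_prob}), i.e. one subgradient iteration, reduces to one linear-time DAG longest-path search and therefore costs $O(|V|+|E|)$. The subgradient optimization performs at most $s$ such iterations before it is stopped, so one subproblem is solved in $O(s(|V|+|E|))$ time; the edge-reweighting and graph-modification bookkeeping that Yen's scheme requires are themselves linear and are absorbed into this bound. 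Multiplying the three layers gives $O(kl \cdot s(|V|+|E|)) = O(kls(|E|+|V|))$, as claimed, and the only remaining overhead is management of the candidate set $X$, which holds $O(kl)$ entries so that each insertion or extract-best costs $O(\log(kl))$ and is dominated by the per-subproblem work.

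The step I expect to be the real obstacle is not this arithmetic but justifying the word \emph{solves}: the charge of $O(s(|V|+|E|))$ per subproblem is only honest if the Lagrangian relaxation actually returns the optimal \emph{antisymmetric} path for that subproblem, whereas subgradient optimization in general only closes the duality gap in the limit and may terminate after $s$ iterations with a feasible-but-suboptimal, or even symmetric, path. I would therefore need to argue — or to state as the operative assumption — that within the $s$ iterations the relaxation yields an antisymmetric optimum (equivalently, that the integrality/duality gap closes), so that Yen's correctness argument for enumerating the $k$ best paths transfers verbatim once the DAG longest-path oracle is replaced by our Lagrangian oracle.
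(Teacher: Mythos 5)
Your complexity accounting is essentially the paper's: $k$ extractions, at most $l$ deviation nodes per extracted path (replacing the generic $|V|$ factor in Yen's bound), $s$ subgradient iterations per subproblem, each iteration being one linear-time DAG longest-path computation by the Lemma. That arithmetic is correct and matches the published proof. Your closing worry about the word \emph{solves} is also legitimate, and the paper deals with it only in the remark following the theorem: $s$ is defined as the \emph{total} number of subgradient iterations including those spent in a branch-and-bound fallback, so exactness is guaranteed by enumeration and the price is that $s$ may in principle be exponential. Stating that as your operative assumption is an acceptable resolution.

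There is, however, one genuine gap. The paper's proof spends its entire first half on a correctness point you do not address: when the suffix from $v^i_j$ to $t$ is concatenated with the prefix of $P_i$, the result must be antisymmetric \emph{as a whole}, and the Lagrangian oracle applied to the suffix subgraph only guarantees that no two nodes \emph{within the suffix} conflict. Nodes of the suffix may still be connected by undirected edges to nodes of the already-fixed prefix, producing a symmetric concatenation from two individually antisymmetric halves. Your phrase ``with the forbidden edge (and the already-fixed prefix) removed'' covers only the standard Yen bookkeeping (deleting the prefix nodes and the deviation edge); it does not cover deleting the \emph{conflict partners} of the prefix nodes. The paper's fix is exactly that: before solving each suffix subproblem, remove from the subgraph every node joined by an undirected edge to some node of the prefix $s \ldots v^i_j$. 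Without this step the algorithm does not compute antisymmetric paths at all, so the theorem's claim that it ``solves the problem'' would fail; with it, the step is linear per subproblem and is absorbed into your $O(s(|E|+|V|))$ charge, so the stated bound is unaffected.
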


\begin{proof}
In iteration $i+1$ of Yen's algorithm the computed path deviating from $P_i$ at node $v^{i}_j$ must satisfy two conditions in order to form an antisymmetric path in $G$.
\begin{enumerate}
\item There are no two nodes in the path from $v^{i}_j$ to $t$ that are in conflict.
\item None of the nodes in the computed path from $v^{i}_j$ to $t$ is in conflict with some node from the prefix of path $P_{i}$ up to node $v^{i}_j$.
\end{enumerate}
The first condition is satisfied by the Lagrangian relaxation formulation itself, because if applied to the subgraph $v^{i}_j \ldots t$, every feasible solution corresponds to an antisymmetric path from node $v^{i}_j$ to $t$.
To meet the second condition it is sufficient to remove all nodes from the subgraph $v^{i}_j \ldots t$ that are connected via an undirected edge with some node of the prefix $s\ldots v^{i}_j$ of $P_{i}$ before we compute the longest antisymmetric path.
This trick also simplifies the longest antisymmetric path search for increasing $j$ as the possibilities to generate an infeasible solution are decreasing.

The complexity of Yen's algorithm for computing the $k$-longest paths in a DAG is $O(k |V| (|E|+|V|))$.
The first factor $|V|$ comes from the fact that, in a general graph, one path can possibly contain all $|V|$ nodes.
In the case of peptide sequencing, the length of a path equals the length of the predicted peptide which usually does not exceed a length of $30$ for typical experimental settings.
In the longest antisymmetric path version using Lagrangian relaxation, the $O(|E|+|V|)$ DAG longest path algorithm gets iteratively called during subgradient optimization algorithm.
Therefore the complexity of our formulation for identification of a peptide containing $l$ amino acids is $O(k l s(|E|+|V|))$ with $s$ being the number of iterations during subgradient optimization.
\end{proof}
Note that the value of $s$ is possibly exponential if the subgradient optimization does not converge and the complete branch and bound tree has to be enumerated.
Nevertheless, in the results section we will show that for our peptide sequencing formulation on average only very few iterations are required which leads to a practically efficient algorithm.

\subsection{Scoring Model}
\label{sec:scoring_model}
We use a probabilistic scoring based on a Bayesian network similar to the scoring model of PepNovo.
Bayesian networks are directed acyclic graphs where nodes represent random variables and the edges represent conditional dependencies between variables.
The variables in our model are the ion types $t\in T$ that are considered by our scoring model and the possible values for each variable is the intensity.
Therefore, as a first step, we normalize the intensity of all peaks to discrete values as defined by Dan\v{c}{\'i}k et al.\cite{dacvp-denovo:1999} by using their rank as intensity.

The usage of Bayesian networks for scoring nodes in the spectrum graph is motivated by the observation that fragmentation events are not independent.
For example, the probability of observing a strong b-ion is not independent of the abundance and intensity of the complementary y-ion.

Unlike for the PepNovo algorithm, where the structure of the probabilistic network is predefined leading to a fixed set of accounted conditional dependencies, we implemented a flexible scoring scheme where the network topology can be either defined by the user or it can be learned during the training process automatically.

For inference and training of the Bayesian network we used the Bayesian Network Classifiers in the machine learning suite Weka\cite{WEKA:2009}.
Similar to PepNovo, we discretize the relative position of a cleavage into several (default 3) equally sized regions $r$ to account for the different intensity distributions in the center and terminal regions usually observed in CID spectra.
For each of the regions we train a Bayesian network using some training set of tandem mass spectra with known peptide identification.
For each training spectrum we construct the node set of the spectrum graph and select an equal number of true positives (vertices representing a true PRM) and false positives (vertices representing not representing PRM). 
For each of the selected nodes we look for witnessing peaks at their calculated positions and record their normalized intensity to obtain the training vectors for the Bayesian network.
Each of the training vectors has one additional entry, the class label, which is $T$ for true positives and $F$ for false positives. 
We select only those ion types of the witness set for the network training that appear in at least $t$ percent of the true positive samples of the training set where the threshold parameter $t$ can be defined by the user.
For each of these selected types we then add a node in the Bayesian network. 
One additional node for the class is added.
During the network training, the structure (set of directed edges) of the network is learned and once the structure is fixed the conditional probability tables are learned from the training data.
While the user can control a huge range of possible options for the Weka Bayesian network classifier training through our program, we set as the default training algorithm the K2-HillClimber and the Bayesian metric for local scoring\cite{WEKA:bayes}.
For a user defined network topology the first step is skipped and only the conditional probability tables are computed.

Once the network is trained, we score a node $v$ in the spectrum graph by looking for peaks in the spectrum at the calculated masses for the selected ion types to obtain the set of intensity observations $I^v$. 
Using the trained Bayesian network \textit{BN} we then compute the log likelihood ratio as:
\begin{equation}
\textit{LLR}(v)= \log \frac{\Pr(I^v \mid \textit{BN}, \text{class}=\textit{T}) } {\Pr(I^v \mid \textit{BN}, \text{class}=\textit{F})}\enspace, 
\end{equation}
where $\Pr(I^v \mid \textit{BN}, \text{class}=\textit{X} \in\{T,F\})$ is the probability of observation $I^v$ under model \textit{BN} when the class variable is set to $X$.
In contrast to Bern and Datta who obtain their false positive samples from perturbation of the correct PRM we only take false positive PRM for which a node was created during the spectrum graph construction. 
We chose that approach since we want the Bayesian network to discriminate between correct and false nodes in the spectrum graph.
By just perturbing the true PRM one will very likely generate false positive training samples containing only zero intensity entries which will never be the case for a node in the spectrum graph as it requires at least one peak to be generated.

Additional to the Bayesian network we also use a simple intensity rank score $S_R(v)$ as it is also used by INSPECT\cite{Tanner_Inspect_2005}.
This score is the ratio between two probabilities, the probability that a peak with a certain intensity rank corresponds to a certain ion type (e.g., a b-ion) and the the probability that a randomly chosen peak was generated by that ion type.
As these values differ between different mass regions of a spectrum, we split the spectrum into three equally spaced mass regions and estimate the probabilities for each of them separately using the same training data as for the Bayesian network.
For example, if we generate a node for a peak of rank 4, and this node interprets the peak as a b-ion, then $S_R(v)$ is the log ratio between the probability that a rank 4 peak is a b-ion and the probability that any random peak is a b-ion. 

The final score $s(v)$ for each node $v$ of the spectrum graph is then computed as:
\begin{equation}
s(v)= \textit{LLR(v)} + S_R(v)
\end{equation}

Nodes having negative scores correspond to unreliable PRMs and are removed from the graph in order to reduce the size of the spectrum graph and speed up the candidate generation process.
Since our formulation is working with edge weights, we move the node scores onto the edges, such that each directed edge gets the score of its left node.
In the filtered spectrum graph we compute the predefined number of suboptimal solutions, each corresponding to one antisymmetric path.
To account for missed cleavages we also add edges corresponding to pairs and triples of amino acids to the spectrum graph.
For each of the generated candidates, in a second step, we try to resolve the pairs and triples of amino acids.
Therefore we generate all possible combinations and permutations of amino acids to generate a candidate superset.
The candidates in that superset are then re-scored by a refined version of a shared peaks count where we reward abundant witness peaks and penalize missing ones.

Given a candidate sequence we look for witnessing peaks in the query spectrum and give a bonus if one was found or a penalty if it is missing. 
Further we check whether the peak is a primary isotopic peak, a secondary isotopic peak or a lone peak. 
A peak is called a primary isotopic peak if we find a child peak at offset 1 Da for a singly charged ion or 0.5 Da for a doubly charged ion.
Equivalently a peak is called a secondary isotopic peak if it has a parent peak with offset -1 Da for a singly charged ion or -0.5 Da for a doubly charged ion.
If a peak is neither primary isotopic nor secondary isotopic then it is a lone peak.
If a witnessing peak is a primary peak we add another bonus to the score while we charge a penalty if it is a secondary peak.   
While the reward and penalty score are actually user parameters we will offer a generic algorithm to estimate reasonable values in future versions.
The candidates are then re-ranked according to this score and the predefined number of candidates is returned.

\section{Results}

In this section we present and discuss our computational results. We compared \antilope with state-of-the-art alternative peptide identification software with respect to running time and quality.

\label{sec:results}
\subsection{Efficiency}
The major contribution of this work is the new algorithmic approach based on Lagrangian relaxation.
We will first give a thorough analysis of the performance and compare it to the ILP formulation (1)-(6). 
Like implemented in PILOT, we generate the suboptimal solutions by introducing additional constraints that cut off previous solutions.
We implemented our algorithm in C++ and use the OpenMS\cite{OpenMS} library that offers convenient data structures and algorithms to handle and manipulate spectral data.
For the ILP formulation we use the commercial CPLEX\cite{cplex} solver software (version 9.0), which is in general the fastest solver available. We were not able to directly compare to PILOT because the software is not available upon request.  
In Fig.~\ref{fig:runtime_comp} we compare the running times of the ILP and our Lagrangian relaxation formulation on a set of 100 tandem mass spectra from the ISB dataset\cite{Keller2002}.
In this comparison we only consider the time required to generate the set of candidate sequences and ignore the preprocessing of the spectrum and the spectrum graph generation as these steps are independent of the applied algorithm.
We compared the running time required to generate the top scoring 20, 30 and 50 candidates for each spectrum.
The figure shows that our approach significantly outperforms the ILP formulation on all instances and the performance gain increases with the number of candidates to be generated.
Our algorithm is on average $\approx 9$ times faster for the best 20 candidates, for 30 and 50 candidates the average advantage increases to a factor of $\approx 12$ and $\approx 18$.
While the run time for of the ILP formulation for the top 50 candidates was usually above 2 seconds, the Lagrangian relaxation formulation requires on average only a few tenths of a second.

\begin{figure}[hbt] 
\centering
\includegraphics[width=1.0\linewidth]{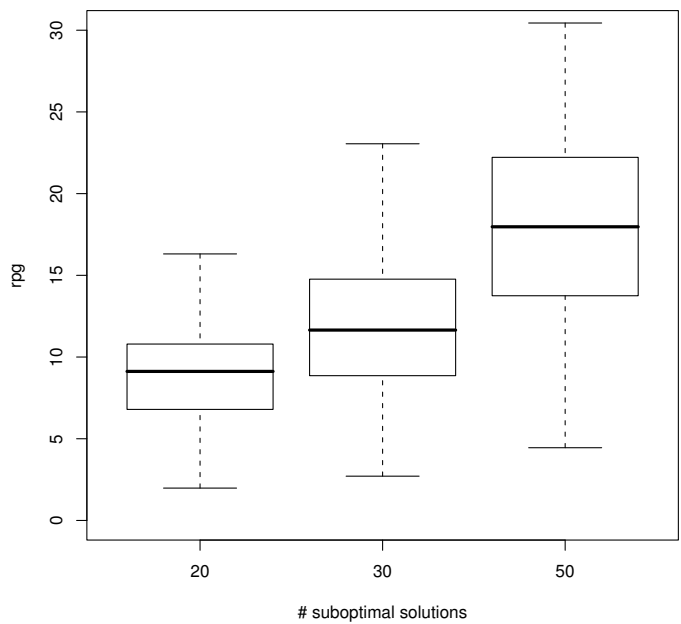}
\caption{Running time comparison between Lagrangian relaxation and ILP formulation for computation of 20, 30, and 50 suboptimal solutions of 100 benchmark spectra. Box-and-whisker plots display median, quartiles, and extrema of the distribution of relative performance gains $\textit{rpg} = \text{run-time(ILP)}/\text{run-time(Lagrange)}$.
\antilope outperforms the CPLEX-based method for all spectra and all numbers of suboptimal solutions. The advantage increases with
the number of suboptimal solutions. The considered spectrum graphs contained between 80 and 200 nodes.}
\label{fig:runtime_comp}
\end{figure}

In a closer analysis we investigated the convergence behavior of our Lagrangian relaxation formulation. It reveals that for each Lagrange problem solved during the path ranking algorithm only very few iterations of the subgradient optimization are required.
The path ranking algorithm maintains a list of previously detected candidate paths together with their scores.
Since the score $Z(\lambda)$ of the Lagrange problem is an upper bound to the score $Z_{\text{IP}}$ of the best possible feasible solution, subgradient optimization can be aborted as soon as $Z(\lambda)$ falls below the lowest score in the candidate list.
If the Lagrangian relaxation does not converge and cannot be aborted after 100 iterations we apply a branching step.
We use the best infeasible path found during the subgradient optimization and arbitrarily choose one node $v_b$ involved in a conflict.
Then we generate two subproblems, one forcing $v_b$ to be in the path and one forbidding $v_b$ to be selected.
We found that only for a very small fraction of the Lagrange problems a branching step had to be performed, and the depth of branch-and-bound trees never exceeded a value of three.
It is necessary to mention that the performance strongly depends on the scoring function used, since a good scoring function will not generate many high scoring nodes for the same peak and only the correct one should receive a significantly high score. Therefore a good scoring function does not only affect the identification performance but also affects the complexity of the candidate generation.

\subsection{Sequencing Performance}
\label{sec:sequencing_performance}

\label{sec:evaluation}
\begin{figure*}[h!t] 
\centering
\includegraphics[width=1.0\linewidth]{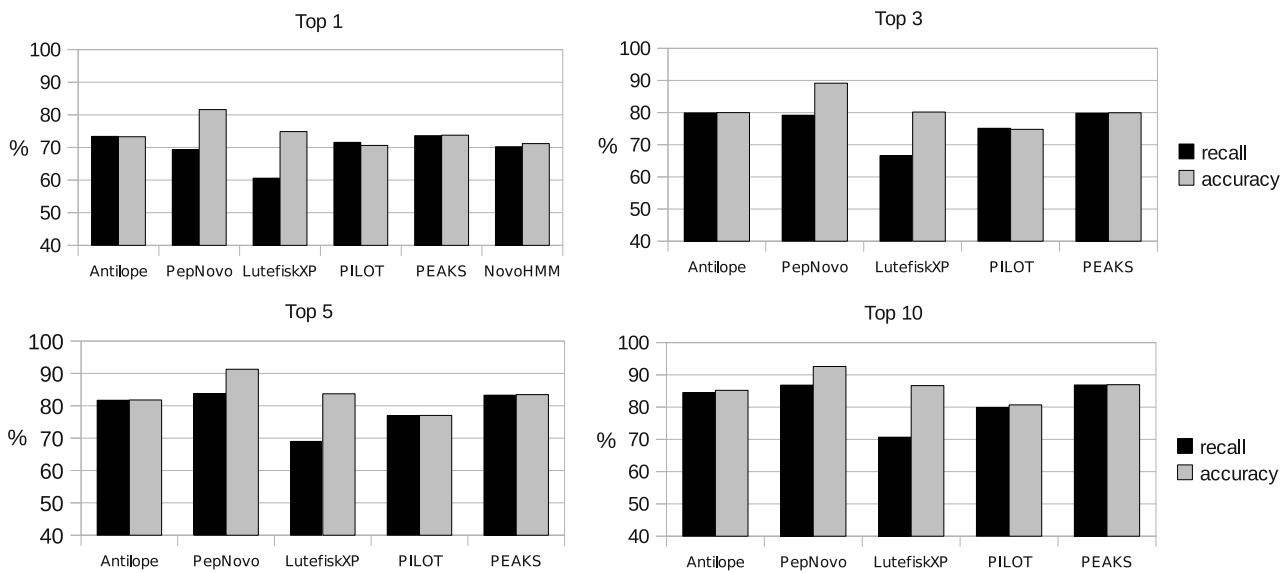}
\caption{Benchmark. Comparison of accuracy and recall of \antilope with NovoHMM, PepNovo, PILOT, PEAKS and LutefiskXP\@. We compare the accuracy and recall of the best prediction among the 
top 1, 3, 5 and 10 ranked candidates returned by each tool. 
As the best prediction we consider the one with the best recall among the candidates.
Since NovoHMM generates only one candidate per spectrum it appears only in the first plot.
Discussion in text.
}
\label{fig:eval_stats}
\end{figure*}

We compared the performance of \antilope to four non-commercial \textit{de novo} sequencing tools, LutefiskXP, NovoHMM, PILOT\footnote{As PILOT was not available, the identifications for the test data were generated by the authors of PILOT.}, PepNovo and the commercial software PEAKS\footnote{We used the PEAKS Online 2.0 web interface.}.
We used two measures, accuracy and recall,  to assess their performance.
Accuracy denotes the fraction of correctly predicted amino acid residues compared to all predicted residues.
Recall is the fraction of correctly predicted residues compared to the total number of residues of the correct peptide sequences.
When looking at suboptimal solutions for each algorithm we looked for the prediction with the highest recall and reported the values of this prediction for recall and precision.
In case of multiple predictions with the same recall value we report the values for the one with the highest precision among them.
As benchmark set, we chose tandem mass spectra from the ISB dataset~\cite{Keller2002} that were generated by an ESI-ion trap mass spectrometer by Thermo Finnigan and spectra from the open proteomics database.
This set of reliably annotated spectra from tryptic peptides has already been used for training of PepNovo and NovoHMM\@.
We created a training set of 1214 spectra from doubly charged precursor ions of unique peptides to train the scoring model.
During the Bayesian network training for each of the 3 mass sectors, only the ion types that had a peak in at least $20\%$ of the true positive training samples are selected for the corresponding Bayesian network. The topologies of the Bayesian networks together with a brief discussion can be found in the supplementary material.
The parameters to score the peptide spectrum matches for the candidate sequences in the superset were chosen as follows:
For an abundant b- or y-ion we awarded the score $\textit{PSM}_b = \textit{PSM}_y = 1$, doubly charged b- or y-ions scored $0.5$, a-ions $0.3$ and all neutral losses were awarded a score of $0.2$. 
Isotopic peaks for some type $t$ were awarded a score of $\textit{PSM}_t \cdot 0.2$. 
If some peak was missing the penalty of $\textit{PSM}_t \cdot 0.5$ was subtracted from the score.
When a peak was classified as a secondary peak its score is reduced to $\textit{PSM}_t \cdot 0.8$.
The score for some peak is then weighted with the relative $m/z$ distance between the expected and the observed $m/z$ value using a linear function.

The test set consists of 200 spectra of peptides (peptides in training and test dataset disjunct) with a molecular mass of at most 1600\unit{Da} and an average peptide length of 10 residues.
We score a predicted amino acid as correct, if its predicted starting mass position does not deviate by more than $2.5\unit{Da}$ from the correct starting mass position.
Further, in our evaluation we do not discriminate between the amino acids Q/K and and I/L since their masses cannot be distinguished.

To compare the tools we do not only look at the top hit, but we also look at the accuracy and recall for the best hit in the top 3, 5 and 10 candidates.
The results are presented in Fig.~\ref{fig:eval_stats}.
Since NovoHMM only generates one candidate per spectrum it appears only in the first plot.
Looking only at the top hit, the recall of \antilope ($\approx73.4\%$) is only marginally lower than of PEAKS ($\approx73.7\%$) but slightly better than that of PILOT ($\approx71.5\%$), NovoHMM ($\approx70.2\%$) and PepNovo ($\approx69.4\%$). 
The recall of LutefiskXP ($\approx60.6\%$) is much lower than for all other tools.
Since \antilope and NovoHMM both compute complete sequences, they have almost equal values for accuracy and recall, while for LutefiskXP and PepNovo these values differ as they allow for gaps in their predicted sequences. 
If we go over from the top hit to the best 3, 5 and 10 candidates, we observe that in terms of recall, \antilope is always very close to PepNovo and PEAKS (equal for the top 3, $2.5\%$ advantage of PepNovo and PEAKS for the top 10) and always approximately $4\%$ better than PILOT\@. 
In terms of accuracy PepNovo ($\approx92\%$) has a better performance than \antilope, PEAKS and PILOT since it allows for partial peptide predictions.
The accuracy of LutefiskXP is slightly better than for \antilope and PILOT but this accuracy is achieved at a much lower recall which is between $12\%$ and $14\%$ lower in all four cases.
The four tools \antilope, LutefiskXP, NovoHMM and PepNovo are comparable in terms of run time which is usually between 0.5 and 1.5 seconds per spectrum. 
The running time of PILOT as reported by the authors was on average around 9 seconds per spectrum.
We cannot directly estimate the runtime of PEAKS since the identification is performed via a web interface.


\section{Conclusion}
\label{sec:conclusions}
We proposed a new algorithmic approach to solve the longest antisymmetric path problem by means of Lagrangian relaxation, combined with a polynomial algorithm for suboptimal solutions.
Using this approach the algorithm is flexible and not restricted to the nested structure of the spectrum graph and solves this problem much faster than an LP relaxation-based method for the same formulation.
Therefore, for our tool \antilope, the candidate generation is no longer the bottleneck as the most time consuming step is the re-ranking phase since the number of possible candidates can easily explode if several double and triple amino acid edges are selected.
In terms of sequencing performance, \antilope is already competitive to available state-of-the-art programs PepNovo and PEAKS while it outperforms LutefiskXP and NovoHMM especially if we also consider suboptimal solutions.
For long peptides PepNovo still has a small advantage, which is mostly due to the fact that the current version of \antilope produces only complete annotations without gaps.

Actually we only generated two nodes for each peak, one for a b- and one for a y-ion.
Generating nodes for all ion types decreased the performance as this always lead to some high scoring, but false nodes and thus to wrong interpretations.
Nevertheless we are sure that generating more nodes can lead to better identifications in combination with a refined scoring scheme.
The algorithmic framework is flexible enough to work with mass spectra generated by different kind of mass spectrometers.
So the user can define for which ion types a node shall be generated.
This can lead to improved identification performance for different datasets.
Combined with a scoring function trained on a representative set of spectra, the ability of our algorithm to directly model multiply charged ions can lead to an improvement over the other algorithms when analyzing tandem mass spectra obtained from higher charged precursor ions.

For the future we plan to improve our algorithm in several directions.
We will include support for identification of peptides containing post-translational modifications.
Further we want to support combinations of complementary fragmentation techniques like CID together with electron transfer dissociation (ETD) or CID with electron capture dissociation (ECD), which can improve the identification\cite{Bern_Spec_Fusion_2008, compnovo}.
In these applications the flexibility of our formulation may become a major advantage over the other programs.

To improve the performance for spectra of longer peptides we will extend \antilope in a way that it can produce partial predictions allowing for gaps at the terminals.
This, together with a machine learning strategy for the re-scoring like the rank-boosting algorithm used by PepNovo, should lead to a further improvement.
\antilope is freely available as part upcoming releases of the open source proteomics library OpenMS\cite{OpenMS} allowing for convenient integration into experimental workflows.

\bibliographystyle{IEEEtran}
\bibliography{denovo}

\begin{IEEEbiography}[{\includegraphics[width=1in,height=1.25in,clip,keepaspectratio]{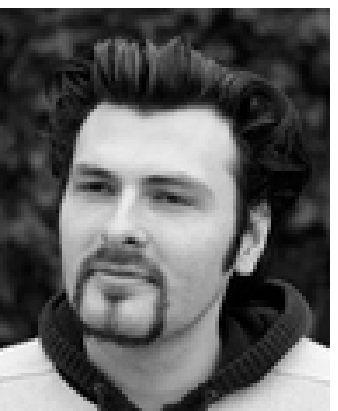}}]{
Sandro Andreotti}
received the MSc degree in Bioinformatics from Freie Universit\"at Berlin, Germany, in 2008 where he is currently working as PhD student in the Algorithmic Bioinformatics group of Knut Reinert. His research is focussing on computational proteomics and discrete optimization.  
\end{IEEEbiography}
\begin{IEEEbiography}[{\includegraphics[width=1in,height=1.25in,clip,keepaspectratio]{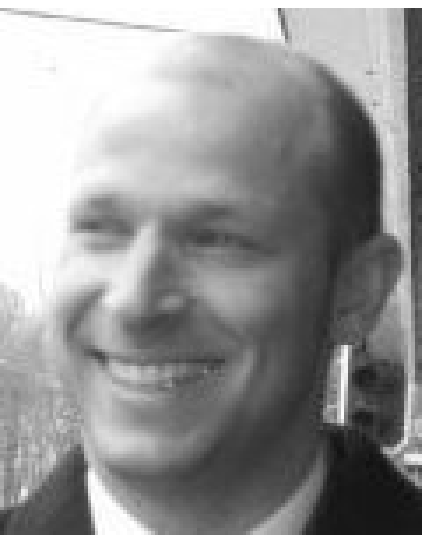}}]{
Gunnar W.\ Klau}
received his PhD in Computer Science in 2001 from Saarland University, Germany. Currently, he heads the Life Sciences group at CWI, the national research center for mathematics and computer science in the Netherlands. His research interests are combinatorial algorithms and discrete optimization in computational biology.
\end{IEEEbiography}
\begin{IEEEbiography}[{\includegraphics[width=1in,height=1.25in,clip,keepaspectratio]{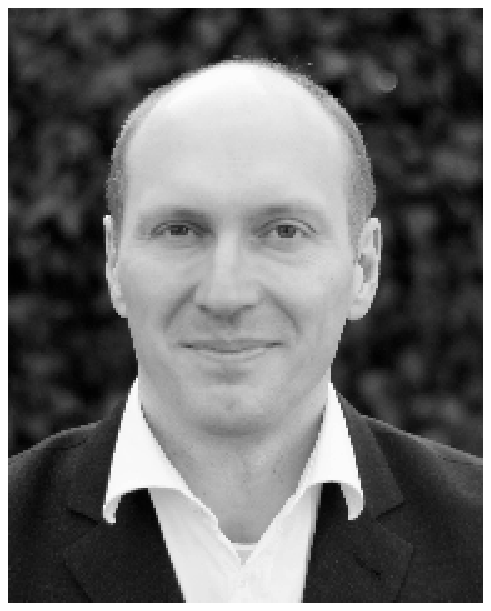}}]{Knut Reinert}
received his PhD 1999 from Saarland University, Germany. He
worked from 1999-2002 for Celera Genomics and took part in the sequencing of the human genome. Currently he is
professor for Algorithmic Bioinformatics at the FU Berlin in Germany. His
research interests lie in developing algorithms for sequence analysis and
proteomics. 
\end{IEEEbiography}
\vfill\end{document}